\title{The Generalized Reed-Muller codes and the radical powers of a modular algebra}
\author{Harinaivo ANDRIATAHINY\\Mention: Mathématiques et informatique,\\Domaine: Sciences et Technologies,\\Université d'Antananarivo, Madagascar\\e-mail: harinaivo.andriatahiny@univ-antananarivo.mg}
\theoremstyle{plain}
\newtheorem{thm}{Theorem}[section]
\newtheorem{prop}[thm]{Proposition}
\newtheorem{lem}[thm]{Lemma}
\newtheorem{cor}[thm]{Corollary}
\newtheorem{thbch}[thm]{Theorem (Berman-Charpin)}
\theoremstyle{definition}
\newtheorem{rem}[thm]{Remark}
\newtheorem{note}[thm]{Notations}
\DeclareMathOperator{\card}{Card}
\DeclareMathOperator{\rad}{Rad}
\begin{document}

\maketitle

\begin{abstract}
First, a new proof of Berman and Charpin's characterization of the Reed-Muller codes over the binary field or over an arbitrary prime field is presented. These codes are considered as the powers of the radical of a modular algebra. Secondly, the same method is used for the study of the Generalized Reed-Muller codes over a non prime field.
\end{abstract}
Keywords: Reed-Muller codes, modular algebra, nilpotent radical\\
MSC 2010: 16N40, 94B05, 12E05
\section{Introduction}
S.D. Berman [4] showed that the binary Reed-Muller codes may be identified with the powers of the radical of the group algebra over the two elements field $\mathbb{F}_{2}$ of an elementary abelian 2-group. P. Charpin [6] gave a generalization of Berman's result for Reed-Muller codes over a prime field.\\
Many authors explored Berman's idea and gave another proofs of Berman's theorem (see [2],[9]).\\
Recently, I.N. Tumaikin [11][12] studied the connections between Basic Reed-Muller codes and the radical powers of the modular group algebra $\mathbb{F}_{q}[H]$ where $H$ is a multiplicative group isomorphic to the additive group of the field $ \mathbb{F}_{q}$ of order $q=p^r$ where $p$ is a prime number and $r$ is an integer. The index of nilpotency of the radical of $\mathbb{F}_{q}[H]$ is $r(p-1)+1$.\\
This paper presents an elementary proof of Berman and Charpin's characterization of the Reed-Muller codes by using a polynomial approach. The modular algebra $\displaystyle{\mathbb{F}_{p}[X_{1},\ldots,X_{m}]\;/\;(\;
X_{1}^{p}-1,\ldots,X_{m}^{p}-1\;)}$ where $m\geq 1$ is an integer is used to represent the ambient space of the codes. It is isomorphic to the group algebra $\mathbb{F}_{p}[\mathbb{F}_{p^m}]$. We utilize some properties of a linear basis of the modular algebra.\\
We study also the case of the Generalized Reed-Muller (GRM) codes over a non prime field $\mathbb{F}_{q}$ (with $r>1$). We consider the modular algebra
\begin{equation*}
A=\displaystyle{\mathbb{F}_{q}[X_{1},\ldots,X_{m}]\;/\;(\;
X_{1}^{q}-1,\ldots,X_{m}^{q}-1\;)}.
\end{equation*}
The index of nilpotency of the radical $M$ of $A$ is $m(q-1)+1$, so there are $m(q-1)+1$ non-zero powers of $M$ (with $M^{0}=A$). It is well-known that there are $m(q-1)+1$ non-zero Reed-Muller codes of length $q^m$ over $\mathbb{F}_{q}$. The main result is Theorem 6.1 which gives the GRM codes over $\mathbb{F}_{q}$ which are radical powers of $A$. We show that except for $M^{0},M$ and $M^{m(q-1)}$, none of the radical powers of $A$ is a GRM code over the non prime field $\mathbb{F}_{q}$.\\
The outline of the paper is as follows. In section 2, we give the definition and some properties of the GRM codes of length $q^m$ over a finite field $\mathbb{F}_{q}$ (see [5]). This section is also devoted to the modular algebra $A$.
In section 3, we give a proof for the links (see [1],[4] and [6]) between GRM codes over a prime field $\mathbb{F}_{p}$ and the radical powers of $
\displaystyle{\mathbb{F}_{p}[X_{1},\ldots,X_{m}]\;/\;(\;
X_{1}^{p}-1,\ldots,X_{m}^{p}-1\;)}$.
In section 4, we study the functions $(x-1)^{i}$, $0\leq i\leq q-1$, over a finite field $\mathbb{F}_{q}$. In section 5, we examine the case of the GRM codes over a non prime field. In section 6, an exemple is given.

\section{The modular algebra $A=\mathbb{F}_{q}[X_{1},\ldots,X_{m}] /(
X_{1}^{q}-1,\ldots,X_{m}^{q}-1)$ and the GRM codes}
Let $q=p^r$ with $p$ a prime number and $r\geq 1$ an integer. We consider the finite field $\mathbb{F}_{q}$ of order $q$.\\
Let $P(m,q)$ be the vector space of the reduced polynomials in $m$ variables over $\mathbb{F}_{q}$:
\begin{equation}\label{redpoly}
P(m,q):=\left\{P(Y_{1},\ldots,Y_{m})=\sum_{i_1=0}^{q-1}\cdots\sum_{i_m=0}^{q-1}u_{i_1\ldots i_m}Y_1^{i_1}\ldots Y_m^{i_m} \mid u_{i_1\ldots i_m}\in \mathbb{F}_{q}\right\}.
\end{equation}
The polynomial functions from $(\mathbb{F}_{q})^m$ to $\mathbb{F}_{q}$ are given by the polynomials of $P(m,q)$.\\
Let $\nu$ be an integer such that $0\leq \nu \leq m(q-1)$. Consider the subspace of $P(m,q)$ defined by
\begin{equation*}
P_{\nu}(m,q):=\left\{P(Y_{1},\ldots,Y_{m})\in P(m,q) \mid \deg(P(Y_{1},\ldots,Y_{m}))\leq \nu\right\}.
\end{equation*}

Consider the ideal $I=\left(X_{1}^{q}-1,\ldots,X_{m}^{q}-1\right)$ of the ring $\mathbb{F}_{q}[X_{1},\ldots,X_{m}]$.\\
Set $x_1=X_1+I,\ldots,x_m=X_m+I$. Then
\begin{equation}
A=\left\{\sum_{i_1=0}^{q-1}\cdots\sum_{i_m=0}^{q-1}a_{i_1\ldots i_m}x_1^{i_1}\ldots x_m^{i_m} \mid a_{i_1\ldots i_m}\in \mathbb{F}_{q}\right\}
\end{equation}
$A$ is a local ring with maximal ideal $M$ which is the radical of $A$.\\
Let $d$ be an integer such that $0\leq d \leq m(q-1)$. Consider the powers $M^d$ of $M$. A linear basis of $M^d$ over $\mathbb{F}_{q}$ is
\begin{equation}
B_{d}:=\left\{(x_{1}-1)^{i_{1}}\ldots (x_{m}-1)^{i_{m}} \mid 0\leq i_{1},\ldots,i_{m}\leq q-1 ,i_{1}+\ldots+ i_{m}\geq d\right\}
\end{equation}
We have the following ascending sequence of ideals:
\begin{equation}\label{seqideal}
\left\{0\right\}=M^{m(q-1)+1}\subset M^{m(q-1)}\subset \cdots \subset M^{2}\subset M\subset A
\end{equation}
Let us fix an order on the set of monomials
\begin{equation*}
\left\{x_{1}^{i_1}\ldots x_{m}^{i_m} \mid 0\leq i_{1},\ldots,i_{m}\leq q-1\right\}.
\end{equation*}
Then we have the following remark:
\begin{rem}\label{identification}
Each element $\sum_{i_1=0}^{q-1}\cdots\sum_{i_m=0}^{q-1}a_{i_1\ldots i_m}x_1^{i_1}\ldots x_m^{i_m}$ of $A$ can be identified with the vector $(a_{i_1\ldots i_m})_{0\leq i_{1},\ldots,i_{m}\leq q-1}$ of $\left(\mathbb{F}_{q}\right)^{q^m}$ and vice-versa. Hence the modular algebra $A$ is identified with $\left(\mathbb{F}_{q}\right)^{q^m}$.
\end{rem}
Let $\alpha$ be a primitive element of the finite field $\mathbb{F}_{q}$. It is clear that $\mathbb{F}_{q}=\left\{0,1,\alpha,\alpha^{2},\ldots,\alpha^{q-2}\right\}$.\\
Set
\begin{equation}\label{fieldelem}
\beta_{0}=0 \quad\text{and}\quad \beta_{i}=\alpha^{i-1} \quad\text{for}\quad 1\leq i\leq q-1.
\end{equation}
When considering $P(m,q)$ and $A$ as vector spaces over $\mathbb{F}_{q}$, we have the following isomorphism:
\begin{equation}\label{isom}
\begin{aligned}
\phi :\quad  P(m,q)&\longrightarrow A \\
        P(Y_{1},\ldots,Y_{m}) &\longmapsto \sum_{i_1=0}^{q-1}\cdots\sum_{i_m=0}^{q-1}P(\beta_{i_1},\ldots ,\beta_{i_m})x_1^{i_1}\ldots x_m^{i_m}
\end{aligned}
\end{equation}
The Generalized Reed-Muller code of length $q^m$ and of order $\nu$ ($0\leq \nu \leq m(q-1)$) over $\mathbb{F}_{q}$ is defined by
\begin{equation}\label{grm}
C_{\nu}(m,q):=\left\{(P(\beta_{i_1},\ldots ,\beta_{i_m}))_{0\leq i_{1},\ldots,i_{m}\leq q-1} \mid P(Y_{1},\ldots,Y_{m})\in P_{\nu}(m,q)\right\}.
\end{equation}
It is a subspace of $\left(\mathbb{F}_{q}\right)^{q^m}$ and we have the following ascending sequence:
\begin{equation}\label{seqgrm}
\left\{0\right\}\subset C_{0}(m,q)\subset C_{1}(m,q)\subset\cdots\subset C_{m(q-1)-1}(m,q)\subset C_{m(q-1)}(m,q)=\left(\mathbb{F}_{q}\right)^{q^m}
\end{equation}
We need the following notations:
\begin{note}
Set $\left[0,q-1\right]=\left\{0,1,2,\ldots ,q-1\right\}$,\\
$\underline{i}:=(i_{1},\ldots,i_{m})\in ([0,q-1])^{m}$,\\
$\left|\underline{i}\right|:=i_{1}+\ldots +i_{m}$,\\
$\underline{j}\leq\underline{i}$ if $j_{l}\leq i_{l}$ for all $l=1,2,\ldots ,m$ where $\underline{j}:=(j_{1},\ldots,j_{m})\in ([0,q-1])^{m}$,\\
$\underline{x}:=(x_{1},\ldots,x_{m})$,\\
$\underline{x}^{\underline{i}}:=x_1^{i_1}\ldots x_m^{i_m}$.
\end{note}
\noindent Consider the polynomial
\begin{equation}\label{jennpoly}
B_{\underline{i}}(\underline{x}):=(x_{1}-1)^{i_1}\ldots (x_{m}-1)^{i_m}.
\end{equation}
It is clear that
\begin{equation}\label{binome}
(x_{l}-1)^{i_l}=\sum_{j=0}^{i_l}(-1)^{i_{l}-j}\binom{i_{l}}{j}x_{l}^{j}
\end{equation}
for all $l=1,2,\ldots ,m$.
\begin{prop}\label{dimeq}
Considering the sequences (\ref{seqgrm})and (\ref{seqideal}), we have
\begin{equation*}
\dim_{\mathbb{F}_{q}}(M^d)=\dim_{\mathbb{F}_{q}}(C_{m(q-1)-d}(m,q))
\end{equation*}
for $0\leq d\leq m(q-1)$.
\end{prop}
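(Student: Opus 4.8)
The plan is to compute each of the two dimensions as the number of lattice points of the box $([0,q-1])^m$ lying in a half-space cut out by the coordinate sum, and then to match the two counts by a single involution of the box.

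For the left-hand side, the set $B_d$ is by hypothesis an $\mathbb{F}_q$-basis of $M^d$, and its elements $B_{\underline i}(\underline x)=(x_1-1)^{i_1}\cdots(x_m-1)^{i_m}$ are pairwise distinct as $\underline i$ runs over the indexing set; hence
\[
\dim_{\mathbb{F}_q}(M^d)=\#\{\underline i\in([0,q-1])^m \mid |\underline i|\ge d\}.
\]
(Equivalently one could read this off the chain $(\ref{seqideal})$, since $B_k\setminus B_{k+1}$ is indexed by $\{\underline i \mid |\underline i|=k\}$, but the direct count suffices.)

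For the right-hand side, put $\nu=m(q-1)-d$. After the identification of $A$ with $(\mathbb{F}_q)^{q^m}$ from Remark~\ref{identification}, the isomorphism $\phi$ of $(\ref{isom})$ carries $P(Y_1,\dots,Y_m)$ to the vector $(P(\beta_{i_1},\dots,\beta_{i_m}))_{\underline i}$; therefore, by the very definition $(\ref{grm})$ of the GRM code, $C_{\nu}(m,q)$ is exactly the image $\phi(P_{\nu}(m,q))$. Since $\phi$ is an $\mathbb{F}_q$-isomorphism, $\dim_{\mathbb{F}_q}(C_{\nu}(m,q))=\dim_{\mathbb{F}_q}(P_{\nu}(m,q))$, and the monomials $Y_1^{i_1}\cdots Y_m^{i_m}$ with $\underline i\in([0,q-1])^m$ and $|\underline i|\le \nu$ form a basis of $P_{\nu}(m,q)$, so
\[
\dim_{\mathbb{F}_q}(C_{\nu}(m,q))=\#\{\underline i\in([0,q-1])^m \mid |\underline i|\le \nu\}.
\]

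It remains to see that the two cardinalities agree when $\nu=m(q-1)-d$. For this I would use the complementation map $\underline i=(i_1,\dots,i_m)\mapsto \underline i^{*}:=(q-1-i_1,\dots,q-1-i_m)$, which is an involution of $([0,q-1])^m$ with $|\underline i^{*}|=m(q-1)-|\underline i|$; it sends $\{\underline i \mid |\underline i|\ge d\}$ bijectively onto $\{\underline i \mid |\underline i|\le m(q-1)-d\}$, giving the claimed equality of dimensions. The whole argument is bookkeeping: the two nontrivial ingredients, namely that $B_d$ is a basis of $M^d$ and that $C_{\nu}(m,q)=\phi(P_{\nu}(m,q))$, are already built into the setup of Section~2, so I expect no genuine obstacle; the only point to get right is to compare the two counts through the involution rather than through an explicit binomial-sum formula for $\#\{\underline i \mid |\underline i|\le\nu\}$.
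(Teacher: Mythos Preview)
Your argument is correct and is essentially identical to the paper's own proof: both compute $\dim_{\mathbb{F}_q}(M^d)$ and $\dim_{\mathbb{F}_q}(C_{m(q-1)-d}(m,q))$ as the cardinalities of $\{\underline{i}\in([0,q-1])^m\mid |\underline{i}|\ge d\}$ and $\{\underline{i}\in([0,q-1])^m\mid |\underline{i}|\le m(q-1)-d\}$ respectively, and then match them via the involution $\underline{i}\mapsto(q-1-i_1,\dots,q-1-i_m)$. The only cosmetic difference is that you spell out the role of $\phi$ in identifying $\dim C_\nu$ with $\dim P_\nu$, which the paper leaves implicit.
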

\begin{proof}
Consider the set $E:=\left\{\underline{i}\in ([0,q-1])^{m}\mid \left|\underline{i}\right|\geq d\right\}$.\\
Since $B_{d}=\left\{B_{\underline{i}}(\underline{x})\mid\left|\underline{i}\right|\geq d\right\}$ is a basis of $M^d$, then $\dim_{\mathbb{F}_{q}}(M^d)=\card(E)$ where $\card(E)$ denotes the number of elements in the set $E$.\\
Consider the set $F:=\left\{\underline{i}\in ([0,q-1])^{m}\mid \left|\underline{i}\right|\leq m(q-1)-d\right\}$.\\We have $\dim_{\mathbb{F}_{q}}(C_{m(q-1)-d}(m,q))=\dim_{\mathbb{F}_{q}}(P_{m(q-1)-d}(m,q))=\card(F)$.\\
The mapping
\begin{equation*}
\begin{aligned}
\theta : ([0,q-1])^{m} &\longrightarrow ([0,q-1])^{m}\\
         (i_{1},\ldots,i_{m})&\longmapsto (q-1-i_{1},\ldots,q-1-i_{m})
\end{aligned}
\end{equation*}
is a bijection and the inverse mapping is $\theta^{-1}=\theta$.\\
Let $\underline{i}\in E$. Then $\left|\underline{i}\right|\geq d$, and $\left|\theta(\underline{i})\right|=m(q-1)-\left|\underline{i}\right|\leq m(q-1)-d$. Hence $\theta(\underline{i}) \in F$. And it follows that $\theta(E)\subseteq F$.\\
Conversely, let $\underline{i}\in F$. Then $\left|\underline{i}\right|\leq m(q-1)-d$, and $\left|\theta(\underline{i})\right|=m(q-1)-\left|\underline{i}\right|\geq d$. So $\theta(\underline{i}) \in E$. Note that $\theta(\theta(\underline{i}))=\underline{i}$. Thus $F\subseteq \theta(E)$.\\
Therefore, $F= \theta(E)$ and $\card(E)=\card(F)$.
\end{proof}
Let $\beta_{k}\in \mathbb{F}_{q}$. Consider the indicator function
\begin{equation}\label{indicfunc}
F_{\beta_{k}}(Y_{l})=1-(Y_{l}-\beta_{k})^{q-1}
\end{equation}
with $1\leq l\leq m$.\\
Then $F_{\beta_{k}}(Y_{l})\in P(m,q)$ and
\begin{equation*}
F_{\beta_{k}}(\beta_{j})=\begin{cases}
         1& \text{if $j=k$},\\
         0& \text{otherwise}.
\end{cases}
\end{equation*}
Consider the interpolation function
\begin{equation}\label{interpfunc}
H_{i_l}(Y_l):=\sum_{k=0}^{i_l}(-1)^{i_{l}-k}\binom{i_{l}}{k}F_{\beta_{k}}(Y_{l}).
\end{equation}
Then, we have $H_{i_l}(Y_l)\in P(m,q)$,
\begin{equation*}
H_{i_l}(\beta_{j})=\begin{cases}
        (-1)^{i_{l}-j}\binom{i_{l}}{j}&\text{if $0\leq j\leq i_l$},\\
        0&\text{if $i_l < j\leq q-1$}
\end{cases}
\end{equation*}
and
\begin{equation}\label{bin}
(x_{l}-1)^{i_l}=\sum_{j=0}^{i_l}H_{i_l}(\beta_{j})x_{l}^{j}.
\end{equation}
Set
\begin{equation}\label{prodinterpf}
H_{\underline{i}}(\underline{Y}):=\prod_{l=1}^{m}H_{i_l}(Y_l).
\end{equation}
Thus
\begin{equation}\label{degr}
\deg(H_{\underline{i}}(\underline{Y}))=\sum_{l=1}^{m}\deg(H_{i_l}(Y_l)).
\end{equation}
\begin{prop}\label{jenpol}
We have $H_{\underline{i}}(\underline{Y})=\phi^{-1}(B_{\underline{i}}(\underline{x}))$, where $\phi$ is the isomorphism defined in (\ref{isom}), i.e.
\begin{equation*}
B_{\underline{i}}(\underline{x})=\sum_{\underline{j}\leq \underline{i}}H_{\underline{i}}(\beta_{j_1},\ldots ,\beta_{j_m})\underline{x}^{\underline{j}}
\end{equation*}
\end{prop}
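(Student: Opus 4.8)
The plan is to prove the equivalent statement $\phi\bigl(H_{\underline{i}}(\underline{Y})\bigr)=B_{\underline{i}}(\underline{x})$ and then apply $\phi^{-1}$. By the definition (\ref{isom}) of $\phi$, this amounts to checking that for every $\underline{j}\in([0,q-1])^{m}$ the coefficient of $\underline{x}^{\underline{j}}$ in $B_{\underline{i}}(\underline{x})$ equals $H_{\underline{i}}(\beta_{j_1},\ldots,\beta_{j_m})$.

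First I would expand the product defining $B_{\underline{i}}(\underline{x})=(x_1-1)^{i_1}\cdots(x_m-1)^{i_m}$ one factor at a time, using (\ref{bin}), which gives $(x_l-1)^{i_l}=\sum_{j=0}^{i_l}H_{i_l}(\beta_j)x_l^{j}$ for each $l=1,\ldots,m$. Multiplying the $m$ resulting sums and collecting monomials yields
\[
B_{\underline{i}}(\underline{x})=\sum_{\underline{j}\leq\underline{i}}\Bigl(\prod_{l=1}^{m}H_{i_l}(\beta_{j_l})\Bigr)\underline{x}^{\underline{j}},
\]
and by the definition (\ref{prodinterpf}) of $H_{\underline{i}}$ the coefficient appearing here is precisely $H_{\underline{i}}(\beta_{j_1},\ldots,\beta_{j_m})$, which is the displayed formula of the proposition. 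To identify this with $\phi\bigl(H_{\underline{i}}(\underline{Y})\bigr)$ I would note that the summation can be extended to all $\underline{j}\in([0,q-1])^{m}$ without change: if $\underline{j}\not\leq\underline{i}$ then $j_l>i_l$ for some index $l$, whence $H_{i_l}(\beta_{j_l})=0$ by the stated values of the interpolation functions, so $H_{\underline{i}}(\beta_{j_1},\ldots,\beta_{j_m})=0$. Therefore $B_{\underline{i}}(\underline{x})=\sum_{\underline{j}}H_{\underline{i}}(\beta_{j_1},\ldots,\beta_{j_m})\,\underline{x}^{\underline{j}}=\phi\bigl(H_{\underline{i}}(\underline{Y})\bigr)$. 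Since each $H_{i_l}(Y_l)$ lies in $P(m,q)$ and involves only $Y_l$ with degree at most $q-1$, the product $H_{\underline{i}}(\underline{Y})$ is a reduced polynomial, so $\phi$ is applicable and $\phi^{-1}\bigl(B_{\underline{i}}(\underline{x})\bigr)=H_{\underline{i}}(\underline{Y})$ follows from the bijectivity of $\phi$.

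There is no genuine obstacle in this argument; the only point requiring care is the bookkeeping of the multi-index summation ranges — verifying that the term-by-term expansion of the product is indexed exactly by $\{\underline{j}:\underline{j}\leq\underline{i}\}$, and then using the vanishing $H_{i_l}(\beta_{j_l})=0$ for $j_l>i_l$ to pass from this restricted sum to the full sum over $([0,q-1])^{m}$ that represents $\phi\bigl(H_{\underline{i}}(\underline{Y})\bigr)$.
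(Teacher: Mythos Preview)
Your proof is correct and follows essentially the same route as the paper: expand each factor $(x_l-1)^{i_l}$ via (\ref{bin}), multiply out the product, and identify the resulting coefficient $\prod_l H_{i_l}(\beta_{j_l})$ with $H_{\underline{i}}(\beta_{j_1},\ldots,\beta_{j_m})$. The paper stops at the displayed identity and leaves the $\phi^{-1}$ assertion implicit, whereas you add the (correct and helpful) observation that the sum extends to all of $([0,q-1])^m$ since $H_{i_l}(\beta_{j_l})=0$ for $j_l>i_l$, making the identification with $\phi(H_{\underline{i}})$ explicit.
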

\begin{proof}
\begin{equation*}
\begin{aligned}
B_{\underline{i}}(\underline{x})&=\prod^m_{l=1}(x_{l}-1)^{i_{l}}\\
&=\prod^m_{l=1}(\sum^{i_{l}}_{j_{l}=0}H_{i_{l}}(\beta_{j_{l}})x_{l}^{j_{l}})\\
&=\sum_{\underline{j}\leq
\underline{i}}(\prod^m_{l=1}H_{i_{l}}(\beta_{j_{l}}))\underline{x}^{\underline{j}}\\
&=\sum_{\underline{j}\leq
\underline{i}}H_{\underline{i}}(\beta_{j_1},\ldots ,\beta_{j_m})\underline{x}^{\underline{j}}.
\end{aligned}
\end{equation*}
\end{proof}

\section{The GRM codes over the prime field $\mathbb{F}_{p}$ and the radical powers of   $\mathbb{F}_{p}[X_{1},...,X_{m}]\;/\;(\;X_{1}^{p}-1,...,X_{m}^{p}-1\;)$}
In this section, we consider the case $r=1$, i.e. $q=p$ a prime number and $\mathbb{F}_{q}=\mathbb{F}_{p}$ a prime field.\\
Let $\mathbb{F}_{p}=\left\{0,1,2,\ldots,p-1\right\}$ and set $\beta_{k}=k$ for all $k=0,1,\ldots,p-1$.\\
Let us study $(x-1)^{i}$ for $0\leq i\leq p-1$ over $\mathbb{F}_{p}$.\\
It is clear that
\begin{equation*}
\displaystyle{(x-1)^{i}\;=\;\sum^i_{j=0}(-1)^{i-j}\binom{i}{j}x^{j}}.
\end{equation*}
For $k\in \mathbb{F}_{p}$, according to (\ref{redpoly}), we have
\begin{equation*}
F_{k}(Y)=1-(Y-k)^{p-1}=-\prod_{j=0(j\neq k)}^{p-1}(Y-j) \in P(1,p)
\end{equation*}
and
\begin{equation*}
F_{k}(j)=\begin{cases}
         1& \text{if $j=k$},\\
         0& \text{otherwise}.
\end{cases}
\end{equation*}
Let us consider the interpolation function
\begin{equation}\label{interfunc}
H_{i}(Y):=\sum_{k=0}^{i}(-1)^{i-k}\binom{i}{k}F_{k}(Y)
\end{equation}
which is in $P(1,p)$.\\
Thus, $H_{i}(j)=(-1)^{i-j}\binom{i}{j}$\; for\; $0\leq j\leq i$\\
and
\begin{equation*}
\displaystyle{(x-1)^{i}\;=\;\sum^i_{j=0}H_{i}(j)x^{j}}.
\end{equation*}
Let $a_{i,d}$ be the elements of $\mathbb{F}_{p}$ defined by
\begin{equation}\label{coeffi}
a_{i,d}:=\sum^i_{j=0}(-1)^{i-j}\binom{i}{j}j^{d}
\end{equation}
where $0\leq i\leq p-1$ and $0\leq d\leq p-1$.
\begin{prop}\label{intfunc}
The interpolation function (\ref{interfunc}) satisfies the relation
\begin{equation*}
H_{i}(Y)=a_{i,0}-\sum_{d=0}^{p-1}a_{i,d}Y^{p-1-d}
\end{equation*}
\end{prop}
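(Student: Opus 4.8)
The plan is to insert the explicit shape of the indicator functions into the definition of $H_i$, expand by the binomial theorem, and interchange the two finite summations so as to read off the coefficient of each monomial $Y^{p-1-d}$.

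Concretely, I would start from $F_k(Y)=1-(Y-k)^{p-1}$ and expand
\[
(Y-k)^{p-1}=\sum_{d=0}^{p-1}\binom{p-1}{d}(-1)^d k^d\,Y^{p-1-d},
\]
with the convention $0^0=1$, which is harmless because for $k=0$ the left-hand side is literally $Y^{p-1}$ and the $d=0$ term on the right is $Y^{p-1}$. Substituting into $H_i(Y)=\sum_{k=0}^i(-1)^{i-k}\binom{i}{k}F_k(Y)$ and pulling the constant $1$ out of the parenthesis $\bigl(1-(Y-k)^{p-1}\bigr)$ gives
\[
H_i(Y)=\Bigl(\sum_{k=0}^i(-1)^{i-k}\binom{i}{k}\Bigr)-\sum_{d=0}^{p-1}\binom{p-1}{d}(-1)^d\Bigl(\sum_{k=0}^i(-1)^{i-k}\binom{i}{k}k^d\Bigr)Y^{p-1-d}.
\]
By the definition (\ref{coeffi}) of $a_{i,d}$, the leading parenthesis is $a_{i,0}$ and the inner sum over $k$ is exactly $a_{i,d}$, so this reads $H_i(Y)=a_{i,0}-\sum_{d=0}^{p-1}\binom{p-1}{d}(-1)^d a_{i,d}\,Y^{p-1-d}$.

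The only arithmetic ingredient needed to finish is the congruence $\binom{p-1}{d}\equiv(-1)^d\pmod p$ for $0\le d\le p-1$, which I would prove in one line by writing $\binom{p-1}{d}=\dfrac{(p-1)(p-2)\cdots(p-d)}{d!}$ and reducing each of the $d$ factors in the numerator modulo $p$ to obtain $\dfrac{(-1)(-2)\cdots(-d)}{d!}=(-1)^d$. Hence $\binom{p-1}{d}(-1)^d=1$ in $\mathbb{F}_p$, and the displayed expression collapses to $H_i(Y)=a_{i,0}-\sum_{d=0}^{p-1}a_{i,d}\,Y^{p-1-d}$, which is the claim. I do not expect a real obstacle: the computation is a bookkeeping interchange of finite sums inside $\mathbb{F}_p[Y]$, and the only delicate point—the $d=0,\,k=0$ term—is dispatched by the remark above.
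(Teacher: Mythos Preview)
Your proof is correct and follows essentially the same route as the paper: substitute $F_k(Y)=1-(Y-k)^{p-1}$, expand $(Y-k)^{p-1}$ by the binomial theorem, swap the finite sums, and invoke $\binom{p-1}{d}\equiv(-1)^d\pmod p$ to recognize the coefficients as $a_{i,d}$. The only difference is cosmetic: you supply a one-line justification of the congruence and flag the $0^0=1$ convention, whereas the paper simply asserts the congruence and does not comment on the $k=0$ term.
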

\begin{proof}
\begin{equation*}
\begin{aligned}
H_{i}(Y)&=\sum_{k=0}^{i}(-1)^{i-k}\binom{i}{k}F_{k}(Y)\\
&=\sum_{k=0}^{i}(-1)^{i-k}\binom{i}{k}\left[1-(Y-k)^{p-1}\right]\\
&=\sum_{k=0}^{i}(-1)^{i-k}\binom{i}{k}\left[1-\sum_{d=0}^{p-1}(-1)^{d}k^{d}\binom{p-1}{d}Y^{p-1-d}\right]
\end{aligned}
\end{equation*}
Since we consider a field of characteristic $p$, then
$\binom{p-1}{d}=(-1)^{d}\mbox{ mod } p$. It follows that
\begin{equation*}
\begin{aligned}
H_{i}(Y)&=\sum_{k=0}^{i}(-1)^{i-k}\binom{i}{k}\left[1-\sum_{d=0}^{p-1}k^{d}Y^{p-1-d}\right]\\
&=\sum_{k=0}^{i}(-1)^{i-k}\binom{i}{k}-\sum_{d=0}^{p-1}\left[\sum_{k=0}^{i}(-1)^{i-k}\binom{i}{k}k^{d}\right]Y^{p-1-d}
\end{aligned}
\end{equation*}
\end{proof}
\begin{prop}\label{coef}
The coefficients $a_{i,d}$ satisfy the following properties
\par(i)- $a_{i,d}=\displaystyle{i\sum_{k=0}^{d-1}\binom{d-1}{k}a_{i-1,k}}
\quad\text{for}\quad 1\leq d\leq p-1\quad \text{and}\quad 1\leq i\leq p-1$.
\par(ii)- $a_{i,d}=0 \quad \text{for}\quad 1\leq i\leq p-1 \quad\text{and}\quad 0\leq d\leq i-1$.
\par(iii)- $a_{i,i}\neq 0\quad \text{for all}\quad i=1,2,\ldots ,p-1$.
\end{prop}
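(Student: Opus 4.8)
The plan is to prove (i) by a direct rearrangement of the defining sum, and then to obtain (ii) and (iii) together by a single induction on $i$ that feeds on (i).

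For (i), I would start from $a_{i,d}=\sum_{j=0}^{i}(-1)^{i-j}\binom{i}{j}j^{d}$ with $1\le d\le p-1$ and $1\le i\le p-1$. Since $d\ge 1$, the term $j=0$ vanishes, so the sum may be taken from $j=1$; then I use $j\binom{i}{j}=i\binom{i-1}{j-1}$ together with $j^{d}=j\cdot j^{d-1}$ to get $a_{i,d}=i\sum_{j=1}^{i}(-1)^{i-j}\binom{i-1}{j-1}j^{d-1}$. Reindexing by $j=k+1$ (so that $(-1)^{i-j}=(-1)^{(i-1)-k}$) and expanding $(k+1)^{d-1}=\sum_{\ell=0}^{d-1}\binom{d-1}{\ell}k^{\ell}$ by the binomial theorem, I interchange the two finite sums; the inner sum over $k$ is exactly $a_{i-1,\ell}$, which yields the asserted identity. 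The only delicate points here are the sign/index shift and the degenerate case $d=1$, where the $\ell$-sum reduces to the single term $a_{i-1,0}$.

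For (ii) and (iii), I would induct on $i$, proving the sharper statement that $a_{i,i}=i!$ in $\mathbb{F}_{p}$ alongside (ii). The base case $i=1$ is immediate: $a_{1,0}=-1+1=0$ and $a_{1,1}=0+1=1$. For the inductive step, first note $a_{i,0}=\sum_{j=0}^{i}(-1)^{i-j}\binom{i}{j}=(-1)^{i}(1-1)^{i}=0$ for $i\ge 1$, which settles $d=0$. For $1\le d\le i-1$, apply (i): in $\sum_{k=0}^{d-1}\binom{d-1}{k}a_{i-1,k}$ every index satisfies $k\le d-1\le i-2$, so $a_{i-1,k}=0$ by the inductive hypothesis for (ii), whence $a_{i,d}=0$. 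For $d=i$, (i) gives $a_{i,i}=i\sum_{k=0}^{i-1}\binom{i-1}{k}a_{i-1,k}$; by the inductive hypothesis for (ii) all terms with $k\le i-2$ vanish, leaving $a_{i,i}=i\,a_{i-1,i-1}=i\,(i-1)!=i!$. Finally, since $1\le i\le p-1$, none of the factors $1,2,\ldots,i$ is a multiple of $p$, so $i!\neq 0$ in $\mathbb{F}_{p}$, which is (iii).

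I do not expect a genuine obstacle: (ii) and (iii) are the classical finite-difference identities $\sum_{j=0}^{i}(-1)^{i-j}\binom{i}{j}j^{d}=0$ for $d<i$ and $=i!$ for $d=i$, read modulo $p$. The single point that actually uses the hypothesis is the last step of (iii), where the reduction of $i!$ modulo $p$ is nonzero precisely because $i\le p-1$; everything else is bookkeeping of signs and binomial index shifts, which is the place where a careless computation in (i) could slip.
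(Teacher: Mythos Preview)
Your proof is correct and follows essentially the same route as the paper: the same direct manipulation for (i) (drop the $j=0$ term, use $j\binom{i}{j}=i\binom{i-1}{j-1}$, reindex, expand $(k+1)^{d-1}$, swap sums), and the same induction on $i$ for (ii)--(iii). The paper in fact only sketches (ii)--(iii) as ``easily proved by induction on $i$ using (i)''; your version is strictly more explicit, in particular the computation $a_{i,i}=i!$ in $\mathbb{F}_p$, which cleanly explains why the bound $i\le p-1$ is exactly what makes (iii) work.
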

\begin{proof}
(i)- For $d\geq 1$,
\begin{equation*}
\begin{aligned}
a_{i,d}&=\sum^i_{j=0}(-1)^{i-j}\binom{i}{j}j^{d}=\sum^i_{j=1}(-1)^{i-j}\binom{i}{j}j^{d}\\
&=\sum^i_{j=1}(-1)^{i-j}\cdot
j\cdot\binom{i}{j}j^{d-1}.
\end{aligned}
\end{equation*}
But
\begin{equation*}
\binom{i}{j}=\frac{i}{j}\binom{i-1}{j-1},
\end{equation*}
thus
\begin{equation*}
\begin{aligned}
a_{i,d}&=\sum^i_{j=1}(-1)^{i-j}\cdot j\cdot\frac{i}{j}\binom{i-1}{j-1}j^{d-1}=i\sum^i_{j=1}(-1)^{i-j}\binom{i-1}{j-1}j^{d-1}\\
&=i\sum^i_{j=1}(-1)^{i-1-(j-1)}\binom{i-1}{j-1}((j-1)+1)^{d-1}.
\end{aligned}
\end{equation*}
By using the relation
\begin{equation*}
((j-1)+1)^{d-1}=\sum_{k=0}^{d-1}\binom{d-1}{k}(j-1)^{k},
\end{equation*}
and setting $J=j-1$, we have
\begin{equation*}
\begin{aligned}
a_{i,d}&=i\sum_{k=0}^{d-1}\binom{d-1}{k}(\sum^i_{j=1}(-1)^{i-1-(j-1)}\binom{i-1}{j-1}(j-1)^{k})\\
&=i\sum^{d-1}_{k=0}\binom{d-1}{k}(\sum^{i-1}_{J=0}(-1)^{i-1-J}\binom{i-1}{J}J^{k}).
\end{aligned}
\end{equation*}
(ii) and (iii) can be easily proved by induction on $i$ and by using (i).
\end{proof}
\begin{prop}\label{degintf}
\begin{equation*}
\deg(H_{i}(Y))=p-1-i.
\end{equation*}
\end{prop}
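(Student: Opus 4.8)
The plan is to read off the degree of $H_{i}(Y)$ directly from the closed form obtained in Proposition~\ref{intfunc} together with the vanishing and non‑vanishing statements for the coefficients $a_{i,d}$ proved in Proposition~\ref{coef}. By Proposition~\ref{intfunc},
\[
H_{i}(Y)=a_{i,0}-\sum_{d=0}^{p-1}a_{i,d}Y^{p-1-d},
\]
so the coefficient of the monomial $Y^{p-1-d}$ in $H_{i}(Y)$ is $-a_{i,d}$, while the isolated summand $a_{i,0}$ only modifies the constant term. Hence computing $\deg(H_{i}(Y))$ amounts to locating the smallest index $d$ for which $a_{i,d}\neq 0$, since that index gives the highest surviving power $Y^{p-1-d}$.

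First I would treat the generic range $1\leq i\leq p-1$. By Proposition~\ref{coef}(ii) we have $a_{i,d}=0$ for all $d$ with $0\leq d\leq i-1$, so the coefficients of $Y^{p-1},Y^{p-2},\ldots,Y^{p-i}$ all vanish and therefore $\deg(H_{i}(Y))\leq p-1-i$. By Proposition~\ref{coef}(iii), $a_{i,i}\neq 0$, so the coefficient $-a_{i,i}$ of $Y^{p-1-i}$ is non‑zero; this forces $\deg(H_{i}(Y))=p-1-i$ exactly. It then remains to dispose of the boundary case $i=0$, which is not covered by Proposition~\ref{coef}: here formula (\ref{coeffi}) gives $a_{0,d}=0^{d}$, hence $a_{0,0}=1$ and $a_{0,d}=0$ for $1\leq d\leq p-1$, and substituting into Proposition~\ref{intfunc} yields $H_{0}(Y)=1-Y^{p-1}$, of degree $p-1=p-1-0$, as claimed.

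I do not expect any genuinely hard step: once Propositions~\ref{intfunc} and~\ref{coef} are available, the statement is a bookkeeping observation about which power of $Y$ carries the first non‑zero coefficient. The only points that require mild care are (a) recognising that the stand‑alone term $a_{i,0}$ in Proposition~\ref{intfunc} affects only the constant coefficient and not the leading monomial, so it plays no role in the degree, and (b) handling $i=0$ separately as above rather than trying to invoke Proposition~\ref{coef} outside its stated range.
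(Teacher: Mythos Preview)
Your argument is correct and is exactly the approach of the paper, which simply says the result is ``obvious by Proposition~\ref{intfunc} and Proposition~\ref{coef}''. Your write-up merely spells out the bookkeeping and adds the boundary case $i=0$, which the paper leaves implicit.
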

\begin{proof}
It is obvious by Proposition \ref{intfunc} and Proposition \ref{coef}.
\end{proof}
\begin{rem}
An explicit expression of $H_{i}(Y)$ as a polynomial of degree $p-1-i$ is
\begin{equation*}
H_i(Y) = \alpha_i \prod_{j=1}^{p-1-i} (Y+j),
\end{equation*}
where $\alpha_i = - i! \mbox{ mod } p$.
\end{rem}
\begin{rem}
Polynomials $H_i(Y)$, $0 \leq i \leq p-1$, satisfy the backward recurrence relation
\begin{eqnarray*}
& & H_{p-1}(Y) = 1, \\
& & H_i(Y) = \frac{1}{i+1} (Y-i-1) H_{i+1}(Y) \quad, \quad 0 \leq i \leq p-2.
\end{eqnarray*}
\end{rem}
\noindent It is clear that in this section, the Proposition \ref{jenpol} become
\begin{equation}\label{jennpolyn}
B_{\underline{i}}(\underline{x})=\sum_{\underline{j}\leq \underline{i}}H_{\underline{i}}(\underline{j})\underline{x}^{\underline{j}}
\end{equation}
The following Theorem is well-known (see [4],[6]).
\begin{thbch}
Let $C_{\nu}(m,p)$ be the Reed-Muller code of length $p^m$ and of order $\nu$ ($0\leq \nu\leq m(p-1)$) over the prime field $\mathbb{F}_{p}$ and $M$ the radical of $\mathbb{F}_{p}[X_{1},...,X_{m}]\;/\;(\;X_{1}^{p}-1,...,X_{m}^{p}-1\;)$. Then
\begin{equation*}
C_{\nu}(m,p)=M^{m(p-1)-\nu}.
\end{equation*}

\end{thbch}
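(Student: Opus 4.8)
The plan is to transport the problem through the $\mathbb{F}_{p}$-linear isomorphism $\phi$ of (\ref{isom}) and reduce it to a degree computation together with a dimension count. Under the identification of $A$ with $(\mathbb{F}_{p})^{p^m}$ from Remark \ref{identification}, the definitions (\ref{isom}) and (\ref{grm}) say precisely that $C_{\nu}(m,p) = \phi(P_{\nu}(m,p))$; so it suffices to prove
\begin{equation*}
\phi\bigl(P_{m(p-1)-d}(m,p)\bigr) = M^{d}
\end{equation*}
for every $d$ with $0 \le d \le m(p-1)$, and then substitute $\nu = m(p-1)-d$ (which runs over the same range).

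Since $B_{d} = \{B_{\underline{i}}(\underline{x}) \mid |\underline{i}| \ge d\}$ is a basis of $M^{d}$ and $\phi$ is a linear isomorphism, the set $\{\phi^{-1}(B_{\underline{i}}(\underline{x})) \mid |\underline{i}| \ge d\}$ is a basis of $\phi^{-1}(M^{d})$; by Proposition \ref{jenpol}, in the form (\ref{jennpolyn}), this is exactly $\{H_{\underline{i}}(\underline{Y}) \mid |\underline{i}| \ge d\}$. The key step is then the degree identity: combining (\ref{degr}) with Proposition \ref{degintf} gives
\begin{equation*}
\deg\bigl(H_{\underline{i}}(\underline{Y})\bigr) = \sum_{l=1}^{m}(p-1-i_{l}) = m(p-1) - |\underline{i}|,
\end{equation*}
so the condition $|\underline{i}| \ge d$ is equivalent to $\deg(H_{\underline{i}}(\underline{Y})) \le m(p-1)-d$. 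Hence every basis vector of $\phi^{-1}(M^{d})$ lies in $P_{m(p-1)-d}(m,p)$, which yields $\phi^{-1}(M^{d}) \subseteq P_{m(p-1)-d}(m,p)$, i.e. $M^{d} \subseteq C_{m(p-1)-d}(m,p)$.

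To close the argument I would invoke Proposition \ref{dimeq}: $\dim_{\mathbb{F}_{p}}(M^{d}) = \dim_{\mathbb{F}_{p}}(C_{m(p-1)-d}(m,p)) = \dim_{\mathbb{F}_{p}}(P_{m(p-1)-d}(m,p))$ (the last equality because $\phi$ is an isomorphism), while $\phi$ also gives $\dim_{\mathbb{F}_{p}}(\phi^{-1}(M^{d})) = \dim_{\mathbb{F}_{p}}(M^{d})$; thus $\phi^{-1}(M^{d})$ is a subspace of $P_{m(p-1)-d}(m,p)$ of the same finite dimension, hence equal to it, and applying $\phi$ finishes the proof. I expect the only mildly delicate point to be the degree identity $\deg(H_{\underline{i}}(\underline{Y})) = \sum_{l}\deg(H_{i_{l}}(Y_{l}))$ — one must know that no leading terms cancel in the product $\prod_{l}H_{i_{l}}(Y_{l})$ — but this is legitimate because each factor $H_{i_{l}}(Y_{l})$ involves only its own variable, so distinct factors cannot interfere; this is exactly (\ref{degr}), and the genuine content (that $\deg H_{i_{l}}(Y_{l})$ equals $p-1-i_{l}$, never less) has already been isolated in Proposition \ref{degintf} via Propositions \ref{intfunc} and \ref{coef}. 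An alternative to the dimension count would be to check directly that $\{H_{\underline{i}}(\underline{Y})\}_{\underline{i}}$ is a basis of $P(m,p)$ whose members of degree at most $m(p-1)-d$ already span $P_{m(p-1)-d}(m,p)$, but routing through Proposition \ref{dimeq} is shorter.
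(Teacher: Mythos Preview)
Your proposal is correct and follows essentially the same approach as the paper: show $M^{d}\subseteq C_{m(p-1)-d}(m,p)$ by computing $\deg(H_{\underline{i}}(\underline{Y}))=m(p-1)-|\underline{i}|$ via (\ref{degr}) and Proposition~\ref{degintf} for each basis element $B_{\underline{i}}(\underline{x})$ of $M^{d}$, then invoke Proposition~\ref{dimeq} for the reverse inclusion. The only difference is cosmetic --- you phrase the argument on the $P(m,p)$ side through $\phi^{-1}$, while the paper stays in $A$ via Remark~\ref{identification} --- but the ingredients and logic are identical.
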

\begin{proof}
Set $d:=m(p-1)-\nu$. The set $B_{d}=\left\{B_{\underline{i}}(\underline{x}) \mid \left|\underline{i}\right|\geq d\right\}$ is a linear basis of $M^d$ over $\mathbb{F}_{p}$. Consider $B_{\underline{i}}(\underline{x})=\sum_{\underline{j}\leq \underline{i}}H_{\underline{i}}(\underline{j})\underline{x}^{\underline{j}} \in M^d$. By (\ref{degr}) and Proposition \ref{degintf}, we have $\deg(H_{\underline{i}}(\underline{Y}))=\sum_{l=1}^{m}p-1-i_{l}=m(p-1)-\left|\underline{i}\right|\leq m(p-1)-d=\nu$. It follows from Remark \ref{identification} and (\ref{grm}) that $B_{\underline{i}}(\underline{x}) \in C_{\nu}(m,p)$. Thus $M^d\subseteq C_{\nu}(m,p)$. Moreover, if we take $r=1$ in Proposition \ref{dimeq}, we have $\dim_{\mathbb{F}_{p}}(M^d)=\dim_{\mathbb{F}_{p}}(C_{\nu}(m,p))$.
\end{proof}

\section{Some properties of $(x-1)^i , 0\leq i\leq q-1$, over an arbitrary finite field $\mathbb{F}_{q}$}
In this section, we consider the finite field $\mathbb{F}_{q}$ where $q=p^r$ with $p$ a prime number and $r\geq 1$ an integer.\\
We have already seen from (\ref{binome}), (\ref{indicfunc}) and (\ref{interpfunc}) of Section 2 that
\begin{equation*}
(x-1)^{i}=\sum_{j=0}^{i}(-1)^{i-j}\binom{i}{j}x^{j} \quad,\quad  0\leq i\leq q-1 ,
\end{equation*}
and the interpolation function is
\begin{equation}\label{interpolf}
H_{i}(Y):=\sum_{k=0}^{i}(-1)^{i-k}\binom{i}{k}F_{\beta_{k}}(Y) \quad ,\quad 0\leq i\leq q-1.
\end{equation}
And we have
\begin{equation*}
(x-1)^{i}=\sum_{j=0}^{i}H_{i}(\beta_{j})x^{j} \quad ,\quad 0\leq i\leq q-1.
\end{equation*}
\begin{lem}\label{lemm}
\begin{equation*}
\binom{p^{r}-1}{d}=(-1)^{d} \mbox{ mod } p
\end{equation*}
where $p$ is a prime number, $r\geq 1$ an integer and $0\leq d\leq p^{r}-1$.
\end{lem}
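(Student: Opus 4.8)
The plan is to prove the congruence inside the polynomial ring $\mathbb{F}_{p}[Y]$ rather than wrestling with the integers directly. The two ingredients are the Frobenius identity $(1+Y)^{p^{r}}=1+Y^{p^{r}}$, valid in characteristic $p$, and the fact that $\mathbb{F}_{p}[Y]$ is an integral domain in which the element $1+Y$ is non-zero, hence cancellable.

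First I would introduce $G(Y):=\sum_{d=0}^{p^{r}-1}(-1)^{d}Y^{d}\in\mathbb{F}_{p}[Y]$ and compute the product $(1+Y)\,G(Y)$. Expanding and reindexing, the cross terms telescope: for each $1\le d\le p^{r}-1$ the coefficient of $Y^{d}$ is $(-1)^{d}+(-1)^{d-1}=0$, so only the constant term $1$ and the top term $(-1)^{p^{r}-1}Y^{p^{r}}$ survive. At this point I would check the sign $(-1)^{p^{r}-1}$: if $p$ is odd then $p^{r}$ is odd, so $p^{r}-1$ is even and the sign is $+1$; if $p=2$ then $-1=1$ in $\mathbb{F}_{2}$. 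In all cases $(1+Y)\,G(Y)=1+Y^{p^{r}}=(1+Y)^{p^{r}}$.

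Cancelling the non-zero factor $1+Y$ in the domain $\mathbb{F}_{p}[Y]$ gives $(1+Y)^{p^{r}-1}=\sum_{d=0}^{p^{r}-1}(-1)^{d}Y^{d}$, and comparing the coefficient of $Y^{d}$ on both sides yields $\binom{p^{r}-1}{d}\equiv(-1)^{d}\pmod{p}$ for every $0\le d\le p^{r}-1$, which is exactly the statement. An alternative route is Lucas' theorem: since every base-$p$ digit of $p^{r}-1$ equals $p-1$, one obtains $\binom{p^{r}-1}{d}\equiv\prod_{i}\binom{p-1}{d_{i}}\equiv\prod_{i}(-1)^{d_{i}}\pmod{p}$, and a short parity check identifies this with $(-1)^{d}$; but the generating-function argument avoids splitting into cases on the digits $d_{i}$.

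There is no real obstacle here. The only spot needing a moment's attention is the sign $(-1)^{p^{r}-1}$ when $p=2$, which is dispatched by the remark that $-1=1$ in $\mathbb{F}_{2}$; equivalently, for $p=2$ the asserted congruence merely says that all the binomial coefficients $\binom{2^{r}-1}{d}$ are odd, which the same computation produces.
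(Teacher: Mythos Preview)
Your argument is correct. The telescoping computation of $(1+Y)G(Y)$, the Frobenius identity $(1+Y)^{p^{r}}=1+Y^{p^{r}}$ in characteristic $p$, and the cancellation of $1+Y$ in the integral domain $\mathbb{F}_{p}[Y]$ all go through exactly as you describe, and the case distinction on the sign $(-1)^{p^{r}-1}$ is handled properly.

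The paper, however, does not argue this way: it simply says the lemma ``can be proved easily by induction on $d$.'' The intended induction is presumably via Pascal's rule together with the fact that $\binom{p^{r}}{d}\equiv 0\pmod{p}$ for $1\le d\le p^{r}-1$: from $\binom{p^{r}}{d}=\binom{p^{r}-1}{d-1}+\binom{p^{r}-1}{d}$ one gets $\binom{p^{r}-1}{d}\equiv -\binom{p^{r}-1}{d-1}\pmod{p}$, and the base case $d=0$ is immediate. Your generating-function route and the paper's induction are really two packagings of the same recursion---your cancellation of $1+Y$ is the global form of the coefficient-by-coefficient step $\binom{p^{r}-1}{d}\equiv -\binom{p^{r}-1}{d-1}$. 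The induction is marginally more elementary since it avoids invoking that $\mathbb{F}_{p}[Y]$ is a domain, while your version has the virtue of producing the full polynomial identity $(1+Y)^{p^{r}-1}=\sum_{d}(-1)^{d}Y^{d}$ in one stroke. The Lucas-theorem alternative you mention is also fine and is the standard textbook route.
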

\begin{proof}
It can be proved easily by induction on $d$.
\end{proof}
The following proposition is fundamental.
\begin{prop}\label{intfonction}
The interpolation function (\ref{interpolf}) satisfies the relation
\begin{equation*}
H_{i}(Y)=\displaystyle{\sum_{d=1}^{q-1}\alpha^{-d}\left[(-1)^i-(\alpha^{d}-1)^{i}\right]Y^{q-1-d}}
\end{equation*}
where $\alpha$ is a primitive element of $\mathbb{F}_{q}$ and $1\leq i\leq q-1$.
\end{prop}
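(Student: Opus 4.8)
The plan is to expand the indicator functions appearing in (\ref{interpolf}) explicitly and then to collect the coefficient of each power of $Y$. First I would substitute $F_{\beta_{k}}(Y)=1-(Y-\beta_{k})^{q-1}$ from (\ref{indicfunc}) into the definition (\ref{interpolf}), which gives
\[
H_{i}(Y)=\sum_{k=0}^{i}(-1)^{i-k}\binom{i}{k}-\sum_{k=0}^{i}(-1)^{i-k}\binom{i}{k}(Y-\beta_{k})^{q-1}.
\]
Since $i\geq 1$, the binomial theorem yields $\sum_{k=0}^{i}(-1)^{i-k}\binom{i}{k}=(1-1)^{i}=0$, so the first sum disappears.

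Next I would expand $(Y-\beta_{k})^{q-1}$ by the binomial theorem and invoke Lemma \ref{lemm}: because $\binom{q-1}{d}\equiv(-1)^{d}\pmod p$, the sign $(-1)^{d}$ produced by $(-\beta_{k})^{d}$ is cancelled, and in $\mathbb{F}_{q}$ we have $(Y-\beta_{k})^{q-1}=\sum_{d=0}^{q-1}\beta_{k}^{\,d}Y^{q-1-d}$. Substituting this and interchanging the two finite summations gives
\[
H_{i}(Y)=-\sum_{d=0}^{q-1}\left(\sum_{k=0}^{i}(-1)^{i-k}\binom{i}{k}\beta_{k}^{\,d}\right)Y^{q-1-d}.
\]

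It then remains to evaluate the inner coefficient for each $d$. For $d=0$ it equals $\sum_{k=0}^{i}(-1)^{i-k}\binom{i}{k}=0$ as above, so the monomial $Y^{q-1}$ is absent. For $1\leq d\leq q-1$ the term $k=0$ contributes $\beta_{0}^{\,d}=0^{d}=0$ and may be discarded; substituting $\beta_{k}=\alpha^{k-1}$ for $1\leq k\leq i$ and writing $\alpha^{(k-1)d}=\alpha^{-d}(\alpha^{d})^{k}$, I obtain
\[
\sum_{k=0}^{i}(-1)^{i-k}\binom{i}{k}\beta_{k}^{\,d}=\alpha^{-d}\sum_{k=1}^{i}(-1)^{i-k}\binom{i}{k}(\alpha^{d})^{k}=\alpha^{-d}\left[(\alpha^{d}-1)^{i}-(-1)^{i}\right],
\]
the last step being the binomial theorem $(\alpha^{d}-1)^{i}=\sum_{k=0}^{i}(-1)^{i-k}\binom{i}{k}(\alpha^{d})^{k}$ after removing the $k=0$ term $(-1)^{i}$. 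Plugging this back and distributing the outer minus sign produces exactly the asserted formula.

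The argument is in essence a bookkeeping exercise, so I do not expect a genuine obstacle. The two points that need care are the sign cancellation coming from Lemma \ref{lemm} (this is precisely what makes the coefficients collapse to such a clean form) and the conventions $0^{0}=1$ and $0^{d}=0$ for $d\geq 1$, which determine which terms of the $k$-summation survive.
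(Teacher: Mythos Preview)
Your proof is correct and follows essentially the same route as the paper: substitute the indicator functions, apply Lemma~\ref{lemm} to reduce $(Y-\beta_{k})^{q-1}$ to $\sum_{d}\beta_{k}^{d}Y^{q-1-d}$, and then recognise the inner $k$-sum as $(\alpha^{d}-1)^{i}$ via the binomial theorem. The only organisational difference is that the paper isolates the $k=0$ term at the outset and later cancels the resulting $Y^{q-1}$ contributions using $\sum_{k=1}^{i}(-1)^{i-k}\binom{i}{k}=(-1)^{i+1}$, whereas you keep $k=0$ in the sum throughout and dispose of it via $\sum_{k=0}^{i}(-1)^{i-k}\binom{i}{k}=0$ together with the convention $0^{0}=1$; your bookkeeping is a bit tidier, but the argument is the same.
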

\begin{proof}
We have
\begin{equation*}
\begin{aligned}
H_{i}(Y)&=\sum_{k=0}^{i}(-1)^{i-k}\binom{i}{k}F_{\beta_{k}}(Y)\\
&=\sum_{k=0}^{i}(-1)^{i-k}\binom{i}{k}\left[1-(Y-\beta_{k})^{q-1}\right]\\
&=(-1)^{i}(1-Y^{q-1})+\sum_{k=1}^{i}(-1)^{i-k}\binom{i}{k}\left[1-(Y-\beta_{k})^{q-1}\right]\\
&=(-1)^{i}(1-Y^{q-1})+\sum_{k=1}^{i}(-1)^{i-k}\binom{i}{k}-\sum_{k=1}^{i}(-1)^{i-k}\binom{i}{k}(Y-\beta_{k})^{q-1}.
\end{aligned}
\end{equation*}
Since
\begin{equation*}
(Y-\beta_{k})^{q-1}=\sum_{d=0}^{q-1}(-1)^{d}(\beta_{k})^{d}\binom{q-1}{d}Y^{q-1-d}
\end{equation*}
and by Lemma \ref{lemm},
we have
\begin{equation*}
(Y-\beta_{k})^{q-1}=\sum_{d=0}^{q-1}(\beta_{k})^{d}Y^{q-1-d}.
\end{equation*}
Thus, by (\ref{fieldelem}),
\begin{equation*}
\begin{aligned}
H_{i}(Y)&=(-1)^{i}(1-Y^{q-1})+\sum_{k=1}^{i}(-1)^{i-k}\binom{i}{k}\\
&-\sum_{k=1}^{i}(-1)^{i-k}\binom{i}{k}\left[\sum_{d=0}^{q-1}(\alpha^{k-1})^{d}Y^{q-1-d}\right]\\
&=(-1)^{i}(1-Y^{q-1})+\sum_{k=1}^{i}(-1)^{i-k}\binom{i}{k}\\
&-\sum_{d=0}^{q-1}(\sum_{k=1}^{i}(-1)^{i-k}\binom{i}{k}(\alpha^{k-1})^{d})Y^{q-1-d}.
\end{aligned}
\end{equation*}
Since
\begin{equation*}
\sum_{k=1}^{i}(-1)^{i-k}\binom{i}{k}=(-1)^{i+1}
\end{equation*}
then
\begin{equation*}
\begin{aligned}
H_{i}(Y)&=(-1)^{i}-(-1)^{i}Y^{q-1}-(-1)^{i}-(\sum_{k=1}^{i}(-1)^{i-k}\binom{i}{k})Y^{q-1}\\
&-\sum_{d=1}^{q-1}(\sum_{k=1}^{i}(-1)^{i-k}\binom{i}{k}(\alpha^{k-1})^{d})Y^{q-1-d}\\
&=-\sum_{d=1}^{q-1}(\sum_{k=1}^{i}(-1)^{i-k}\binom{i}{k}(\alpha^{k-1})^{d})Y^{q-1-d}\\
&=-\sum_{d=1}^{q-1}\alpha^{-d}(\sum_{k=1}^{i}(-1)^{i-k}\binom{i}{k}(\alpha^{k})^{d})Y^{q-1-d}\\
&=-\sum_{d=1}^{q-1}\alpha^{-d}\left[\sum_{k=0}^{i}(-1)^{i-k}\binom{i}{k}(\alpha^{k})^{d}-(-1)^{i}\right]Y^{q-1-d}\\
&=-\sum_{d=1}^{q-1}\alpha^{-d}\left[(\alpha^{d}-1)^{i}-(-1)^{i}\right]Y^{q-1-d}\\
&=\sum_{d=1}^{q-1}\alpha^{-d}\left[(-1)^i-(\alpha^{d}-1)^{i}\right]Y^{q-1-d}.
\end{aligned}
\end{equation*}
\end{proof}
\begin{cor}\label{composite}
\begin{enumerate}
\item $H_{1}(Y)=-\sum_{d=0}^{q-2}Y^d$.
\item $H_{2}(Y)=-\sum_{k=0}^{q-2}(2-\alpha^{-k})Y^k$.
\item $H_{q-1}(Y)=1$.
\end{enumerate}
\end{cor}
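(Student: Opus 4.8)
The plan is to read all three formulas directly off the closed expression $H_i(Y)=\sum_{d=1}^{q-1}\alpha^{-d}\bigl[(-1)^i-(\alpha^d-1)^i\bigr]Y^{q-1-d}$ of Proposition~\ref{intfonction}, by substituting $i=1$, $i=2$ and $i=q-1$, simplifying the bracket in each case, and finally relabelling the exponents through $k=q-1-d$; since $\alpha^{q-1}=1$, this substitution turns $\alpha^{d}$ into $\alpha^{q-1-k}=\alpha^{-k}$.

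First I would take $i=1$: then $(-1)^1-(\alpha^d-1)^1=-1-\alpha^d+1=-\alpha^d$, so the coefficient of $Y^{q-1-d}$ is $\alpha^{-d}\cdot(-\alpha^d)=-1$; summing over $d=1,\dots,q-1$ and writing $k=q-1-d$ gives $H_1(Y)=-\sum_{k=0}^{q-2}Y^k$. (As a sanity check one can instead use the definition directly: $H_1(Y)=-F_{\beta_0}(Y)+F_{\beta_1}(Y)=Y^{q-1}-(Y-1)^{q-1}$, and Lemma~\ref{lemm} gives $(Y-1)^{q-1}=\sum_{d=0}^{q-1}Y^{q-1-d}$, so cancelling the $Y^{q-1}$ term produces the same result.) Next I would take $i=2$: here $(-1)^2-(\alpha^d-1)^2=1-\bigl(\alpha^{2d}-2\alpha^d+1\bigr)=2\alpha^d-\alpha^{2d}$, hence the coefficient of $Y^{q-1-d}$ equals $\alpha^{-d}\bigl(2\alpha^d-\alpha^{2d}\bigr)=2-\alpha^d$; relabelling $k=q-1-d$ and using $\alpha^{d}=\alpha^{-k}$ turns this into the coefficient $2-\alpha^{-k}$ of $Y^k$ for $0\le k\le q-2$, which is the asserted expression for $H_2(Y)$.

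Finally I would take $i=q-1$, which I would treat by isolating the term $d=q-1$. Note first that $(-1)^{q-1}=1$: trivially if $p=2$, and because $q-1=p^r-1$ is even when $p$ is odd. For $1\le d\le q-2$ we have $\alpha^{d}\neq 1$ (the order of $\alpha$ is $q-1$), so $\alpha^{d}-1\in\mathbb{F}_q^{*}$ and $(\alpha^{d}-1)^{q-1}=1$; thus the bracket $(-1)^{q-1}-(\alpha^{d}-1)^{q-1}=1-1=0$ and these terms contribute nothing. For $d=q-1$ we have $\alpha^{q-1}=1$, so $\alpha^{d}-1=0$, $(\alpha^{d}-1)^{q-1}=0$, the bracket is $1$, the prefactor is $\alpha^{-(q-1)}=1$, and $Y^{q-1-d}=Y^0=1$; hence $H_{q-1}(Y)=1$. (Alternatively, from the value formula of Section~2, $H_{q-1}(\beta_j)=(-1)^{q-1-j}\binom{q-1}{j}=(-1)^{q-1-j}(-1)^{j}=(-1)^{q-1}=1$ for every $j$ by Lemma~\ref{lemm}, so $H_{q-1}$ and the constant polynomial $1$ define the same function on $\mathbb{F}_q$ and therefore coincide as reduced polynomials.)

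I do not expect a genuine obstacle here: once Proposition~\ref{intfonction} is in hand, each item is a one-line simplification of the bracket. The only points needing a little care are the index change $d\leftrightarrow q-1-d$ together with the identity $\alpha^{q-1}=1$ (which is what converts $\alpha^{d}$ into $\alpha^{-k}$), and, for the third item, splitting off the boundary term $d=q-1$ and recording that $(-1)^{q-1}=1$ for every prime power $q$.
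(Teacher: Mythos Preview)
Your approach is exactly the one the paper intends: the corollary carries no proof in the paper and is meant as an immediate specialization of Proposition~\ref{intfonction}, and your substitutions $i=1,2,q-1$ together with the reindexing $k=q-1-d$ (using $\alpha^{q-1}=1$) carry this out cleanly. The treatment of $i=q-1$, including the split at $d=q-1$ and the observation $(-1)^{q-1}=1$, is correct.

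One point you should flag explicitly rather than gloss over: for $i=2$ your computation gives
\[
H_2(Y)=\sum_{k=0}^{q-2}\bigl(2-\alpha^{-k}\bigr)Y^k,
\]
i.e.\ \emph{without} the overall minus sign appearing in the stated item~2. Your formula is the correct one (it follows both from Proposition~\ref{intfonction} as you wrote and from a direct expansion $H_2=F_{\beta_0}-2F_{\beta_1}+F_{\beta_2}$), so the minus sign in the paper's statement is a typographical slip. You should not claim that your coefficient ``is the asserted expression''; instead note the discrepancy. This has no effect on the subsequent use of the corollary in Theorem~5.1, where only $\deg H_2(Y)=q-2$ is needed (the leading coefficient is $\pm(2-\alpha)$, and $\alpha\ne 2$ since $2\in\mathbb{F}_p$ cannot be a primitive element of a non-prime field $\mathbb{F}_q$).
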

\begin{rem}\label{intzero}
$H_{0}(Y)=F_{\beta_{0}}(Y)=F_{0}(Y)=1-Y^{q-1}$.
\end{rem}
\noindent The next corollary is important to what follows.
\begin{cor}\label{key}
If $\mathbb{F}_{q}$ is a non prime, then we have
\begin{equation*}
\deg(H_{q-2}(Y))=q-2.
\end{equation*}
\end{cor}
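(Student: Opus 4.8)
The plan is to extract the leading term of $H_{q-2}(Y)$ straight from the closed form provided by Proposition \ref{intfonction}. Setting $i=q-2$ there gives
\[
H_{q-2}(Y)=\sum_{d=1}^{q-1}\alpha^{-d}\bigl[(-1)^{q-2}-(\alpha^{d}-1)^{q-2}\bigr]Y^{q-1-d},
\]
which already shows $\deg(H_{q-2}(Y))\le q-2$, with equality exactly when the $d=1$ coefficient $\alpha^{-1}\bigl[(-1)^{q-2}-(\alpha-1)^{q-2}\bigr]$ is non-zero. Since $\alpha\neq 0$, the whole problem reduces to proving $(-1)^{q-2}\neq(\alpha-1)^{q-2}$ in $\mathbb{F}_q$.

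Next I would bring in the hypothesis: $\mathbb{F}_q$ non prime means $q=p^r$ with $r\ge 2$, so $q\ge 4$; as $\alpha$ is a primitive element it has multiplicative order $q-1\ge 3$, hence $\alpha\notin\{0,1\}$ and $\alpha-1\in\mathbb{F}_q^{*}$. The elementary fact to invoke is that, since $q-2\equiv-1\pmod{q-1}$, raising any element of $\mathbb{F}_q^{*}$ to the power $q-2$ coincides with inverting it; thus $(\alpha-1)^{q-2}=(\alpha-1)^{-1}$ and $(-1)^{q-2}=(-1)^{-1}=-1$ (the latter holding in every characteristic).

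Finally I would conclude by contradiction: if $(\alpha-1)^{-1}=-1$ then $\alpha-1=(-1)^{-1}=-1$, forcing $\alpha=0$, which is absurd. Hence the coefficient of $Y^{q-2}$ is non-zero and $\deg(H_{q-2}(Y))=q-2$. I do not foresee a real difficulty; the only points needing attention are the reduction of exponents modulo $q-1$ and the explicit use of primitivity (equivalently $q\ge 4$) to guarantee $\alpha\neq 0,1$ and $\alpha-1\neq 0$ — this is precisely where the non-prime assumption is consumed.
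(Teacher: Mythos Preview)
Your proof is correct and follows essentially the same route as the paper: both isolate the coefficient of $Y^{q-2}$ via Proposition~\ref{intfonction}, assume it vanishes, and derive the contradiction $\alpha=0$. The only cosmetic difference is that you invoke $x^{q-2}=x^{-1}$ in $\mathbb{F}_q^{*}$ directly, whereas the paper multiplies the assumed equality by $\alpha-1$ to reach $(\alpha-1)^{q-1}=1$ before simplifying---the underlying algebra is identical.
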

\begin{proof}
In Proposition \ref{intfonction}, for $i=q-2$, the coefficient of $Y^{q-2}$ is\\ $\alpha^{-1}\left[(-1)^{q-2}-(\alpha-1)^{q-2}\right]$. If $(\alpha-1)^{q-2}=(-1)^{q-2}$, then $(\alpha-1)^{q-1}=(-1)^{q-2}(\alpha-1)$. Since $\alpha$ is a primitive element of $\mathbb{F}_{q}$, then $\alpha\neq 1$ and $(\alpha-1)^{q-1}=1$. Thus  $1=(-1)^{q-2}\alpha+(-1)^{q-1}$, and $(-1)^{q-2}\alpha=0$. hence, $\alpha =0$. This is a contradiction.
\end{proof}

\section{The GRM codes of length $q^m$ over a non prime field $\mathbb{F}_{q}$ and the powers of the radical of $A=\mathbb{F}_{q}[X_{1},\ldots,X_{m}]\;/\;(\;
X_{1}^{q}-1,\ldots,X_{m}^{q}-1\;)$}
From sections 2 and 4, we have our main theorem:
\begin{thm}
Let $C_{\nu}(m,q)$ be the Generalized Reed-Muller code of length $q^m$ ($m\geq 1$ an integer) and of order $\nu$ over a non prime field $\mathbb{F}_{q}$ and $M=\rad(A)$ where $A=\mathbb{F}_{q}[X_{1},\ldots,X_{m}]\;/\;(\;
X_{1}^{q}-1,\ldots,X_{m}^{q}-1\;)$.
Then
\par(i)- $M^{m(q-1)}=C_{0}(m,q)$, $M=C_{m(q-1)-1}(m,q)$ and $M^{0}=C_{m(q-1)}(m,q)$
\par(ii)- $M^{i}\neq C_{m(q-1)-i}(m,q)$ for all $i$ such that $2\leq i\leq m(q-1)-1$.
\end{thm}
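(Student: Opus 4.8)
The plan is to transport everything through the linear isomorphism $\phi$ of (\ref{isom}). By Remark \ref{identification} and the definition (\ref{grm}), the code $C_\nu(m,q)$ is exactly the image $\phi(P_\nu(m,q))$; hence $M^i = C_{m(q-1)-i}(m,q)$ is equivalent to $\phi^{-1}(M^i) = P_{m(q-1)-i}(m,q)$ inside $P(m,q)$. By Proposition \ref{jenpol} the space $\phi^{-1}(M^i)$ is spanned by the polynomials $H_{\underline{i}}(\underline{Y}) = \prod_{l=1}^{m} H_{i_l}(Y_l)$ with $|\underline{i}| \ge i$, and by (\ref{degr}) one has $\deg H_{\underline{i}} = \sum_{l=1}^{m}\deg H_{i_l}$. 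So the whole theorem comes down to knowing the degrees $\deg H_a$ for $0\le a\le q-1$, which Section 4 supplies: $\deg H_{q-1}=0$ (Corollary \ref{composite}), $\deg H_a \le q-2$ for every $a$ with $1\le a\le q-1$ (the expansion of Proposition \ref{intfonction} carries no $Y^{q-1}$ term), $\deg H_0 = q-1$ (Remark \ref{intzero}), and --- the decisive input --- $\deg H_{q-2}=q-2$, which holds precisely because $\mathbb{F}_q$ is non prime (Corollary \ref{key}).

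Part (i) is then immediate. Since $M^0=A$ and $C_{m(q-1)}(m,q)=(\mathbb{F}_q)^{q^m}=A$ by (\ref{seqgrm}), the two agree. The only multi-index with $|\underline{i}|\ge m(q-1)$ is $\underline{i}=(q-1,\dots,q-1)$, and $H_{(q-1,\dots,q-1)}=\prod_l H_{q-1}(Y_l)=1$ has degree $0$, so $\phi^{-1}(M^{m(q-1)})\subseteq P_0(m,q)$; as $\dim M^{m(q-1)}=\dim C_0(m,q)$ by Proposition \ref{dimeq}, this forces $M^{m(q-1)}=C_0(m,q)$. For $M=M^1$, every basis multi-index $\underline{i}\ne\underline{0}$ has some $i_l\ge 1$, whence $\deg H_{\underline{i}}\le (q-2)+(m-1)(q-1)=m(q-1)-1$, so $\phi^{-1}(M)\subseteq P_{m(q-1)-1}(m,q)$, and Proposition \ref{dimeq} upgrades this inclusion to equality.

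For part (ii), fix $i$ with $2\le i\le m(q-1)-1$. By Proposition \ref{dimeq} the two spaces still have equal dimension, so it suffices to exhibit one multi-index $\underline{i}$ with $|\underline{i}|\ge i$ and $\deg H_{\underline{i}} > m(q-1)-i$; this single ``bad'' basis vector shows $\phi^{-1}(M^i)\not\subseteq P_{m(q-1)-i}(m,q)$, hence $M^i\ne C_{m(q-1)-i}(m,q)$. I would split into two cases. If $i\le m(q-2)$, put $k=\lceil i/(q-2)\rceil\in\{1,\dots,m\}$ and let $\underline{i}$ have $k$ entries equal to $q-2$ and the remaining $m-k$ equal to $0$; then $|\underline{i}|=k(q-2)\ge i$ while $\deg H_{\underline{i}}=k(q-2)+(m-k)(q-1)=m(q-1)-k$, which exceeds $m(q-1)-i$ because $k=\lceil i/(q-2)\rceil\le\lceil i/2\rceil<i$ when $i\ge 2$ and $q\ge 4$. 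If $m(q-2)<i\le m(q-1)-1$, put $a=i-m(q-2)\in\{1,\dots,m-1\}$ and let $\underline{i}$ have $a$ entries equal to $q-1$ and $m-a$ equal to $q-2$; then $|\underline{i}|=a(q-1)+(m-a)(q-2)=i$ and $\deg H_{\underline{i}}=(m-a)(q-2)$, which exceeds $m(q-1)-i=m-a$ since $q-2>1$.

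The hard part is not any single computation but arranging these two families of multi-indices so that $|\underline{i}|$ climbs to $i$ while $\sum_l\deg H_{i_l}$ stays strictly above $m(q-1)-i$ across the whole range $2\le i\le m(q-1)-1$: because $a\mapsto\deg H_a$ is no longer the monotone map $a\mapsto q-1-a$ of the prime case, only decompositions of $i$ with entries drawn from $\{0,q-2,q-1\}$ are guaranteed safe, and that is exactly what dictates the case split above. Conceptually, the non prime hypothesis enters through Corollary \ref{key}: over $\mathbb{F}_p$ one has $\deg H_{p-2}=1$, the very value that makes Berman--Charpin hold for every order, whereas over a non prime $\mathbb{F}_q$ this degree jumps to $q-2$, and the resulting inequality $q-2>1$ (that is, $q\ge 4$, automatic once $r\ge 2$) produces the strict mismatch for every intermediate power $M^i$.
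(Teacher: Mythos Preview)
Your proof is correct and follows the same overall architecture as the paper: transport through $\phi$, bound the degrees of the $H_{\underline{i}}$ for part (i), and for part (ii) exhibit a single basis element $B_{\underline{i}}(\underline{x})\in M^i$ whose preimage $H_{\underline{i}}$ has degree strictly exceeding $m(q-1)-i$. The difference lies in the choice of witnesses for part (ii). The paper marches through $m$ ``steps'' using multi-indices with entries in $\{q-2,q-1\}$, followed by a ``final step'' using entries in $\{0,1,2\}$; the latter requires in addition that $\deg H_1=\deg H_2=q-2$, read off from Corollary \ref{composite}. Your two-case split uses only entries in $\{0,q-2,q-1\}$ throughout, and hence relies solely on the three degree values $\deg H_0=q-1$, $\deg H_{q-2}=q-2$, $\deg H_{q-1}=0$. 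This is a bit more economical and makes the role of Corollary \ref{key}---the single place where the non-prime hypothesis is used---stand out more sharply. Either route works; yours packages the case analysis more compactly via the ceiling $k=\lceil i/(q-2)\rceil$ rather than an explicit induction through successive ranges.
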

\begin{proof}
Since $\mathbb{F}_{q}$ is a non prime field then $q\geq 4$.\\
(i)-(a)- $M^{m(q-1)}$ is linearly generated over $\mathbb{F}_{q}$ by $B_{(q-1,\ldots,q-1)}(\underline{x})=\check{1}$ the ``all one word''. By (\ref{degr}) and Corollary \ref{composite}, we have $\deg(H_{(q-1,\ldots,q-1)}(\underline{Y}))=0$. It follows from Proposition \ref{jenpol}, Remark \ref{identification} and (\ref{grm}) that $B_{(q-1,\ldots,q-1)}(\underline{x})\in C_{0}(m,q)$. Thus $M^{m(q-1)}\subseteq C_{0}(m,q)$. And by Proposition \ref{dimeq}, we have\\ $\dim_{\mathbb{F}_{q}}(M^{m(q-1)})=\dim_{\mathbb{F}_{q}}(C_{0}(m,q))$.\\
(b)- Consider $B_{\underline{i}}(\underline{x}):=(x_{1}-1)^{i_1}\ldots (x_{m}-1)^{i_m}\in M$. There is an integer $l$ such that $1\leq l\leq m$ and $i_{l}\geq 1$. By Proposition \ref{intfonction}, $\deg(H_{i_{l}}(Y))\leq q-2$. And we have $\deg(H_{i}(Y))\leq q-1$ for all $i\neq i_{l}$. So $\deg(H_{\underline{i}}(\underline{Y}))\leq q-2+(m-1)(q-1)=m(q-1)-1$. Thus $B_{\underline{i}}(\underline{x})\in C_{m(q-1)-1}$. This implies that $M\subseteq C_{m(q-1)-1}(m,q)$, and by Proposition \ref{dimeq}, the equality holds.\\
(c)- It is obvious because $C_{m(q-1)}(m,q)\subseteq A=M^{0}$ and the Proposition \ref{dimeq} give the result.\\
(ii)- Consider the following sequence:\\
$\left\{0\right\}\subset M^{m(q-1)}\subset M^{m(q-1)-1}\subset \cdots \subset M^{m(q-1)-(q-2)+1}\subset M^{m(q-1)-(q-2)}\\ \subset\cdots\subset M^{m(q-1)-2(q-2)+1}\subset\cdots\subset M^{m(q-1)-(m-1)(q-2)}\subset\cdots\\ \subset M^{m(q-1)-m(q-2)+1}\subset M^{m}\subset M^{m-1}\subset M^{m-2}\subset\cdots\subset M^{2}\subset M\subset A$.\\
For simplicity, let us proceed step by step:\\
\underline{Step one}:\\
- $M^{m(q-1)-1}$ is linearly generated over $\mathbb{F}_{q}$ by $B_{\underline{i}}(\underline{x})$ such that $\left|\underline{i}\right|\geq m(q-1)-1$.\\
Consider $B_{(q-2,q-1,\ldots,q-1)}(\underline{x})$ which is in $M^{m(q-1)-1}$. By Corollary \ref{composite}, Corollary \ref{key} and (\ref{degr}),
we have $\deg(H_{(q-2,q-1,\ldots,q-1)}(\underline{Y}))=q-2 > 1$ (for $q\geq 4$).\\
Thus $B_{(q-2,q-1,\ldots,q-1)}(\underline{x})\notin C_{1}(m,q)$. It follows that $M^{m(q-1)-1}\neq C_{1}(m,q)$.\\
- Since $q\geq 4$, then $M^{m(q-1)-1}\subseteq M^{m(q-1)-(q-2)+1}$.\\
Therefore, $B_{(q-2,q-1,\ldots,q-1)}(\underline{x})\in M^{m(q-1)-(q-2)+1}$ (*).\\
And since $\deg(H_{(q-2,q-1,\ldots,q-1)}(\underline{Y}))=q-2 > q-3$, then $B_{(q-2,q-1,\ldots,q-1)}(\underline{x})\notin C_{q-3}(m,q)$.\\
Hence $M^{m(q-1)-(q-2)+1}\neq C_{q-3}(m,q)$.\\
- It is clear by (\ref{seqgrm}) that $B_{(q-2,q-1,\ldots,q-1)}(\underline{x})\notin C_{i}(m,q)$ for $2\leq i\leq q-4$, then by (\ref{seqideal}) we have $M^{m(q-1)-i}\neq C_{i}(m,q)$ for $2\leq i\leq q-4$.\\
In particular, the statement is proved for the case $m=1$.\\
\underline{Step two}:\\
- $M^{m(q-1)-(q-2)}$ is linearly generated over $\mathbb{F}_{q}$ by $B_{\underline{i}}(\underline{x})$ such that $\left|\underline{i}\right|\geq m(q-1)-(q-2)$.\\
Consider $B_{(q-2,q-2,q-1,\ldots,q-1)}(\underline{x})$ which is in $M^{m(q-1)-(q-2)}$ by (*).\\
We have $\deg(H_{(q-2,q-2,q-1,\ldots,q-1)}(\underline{Y}))=2(q-2) > q-2$ (for $q\geq 4$).\\
Thus $B_{(q-2,q-2,q-1,\ldots,q-1)}(\underline{x})\notin C_{q-2}(m,q)$. So $M^{m(q-1)-(q-2)}\neq C_{q-2}(m,q)$.\\
- Since $q\geq 4$, then $M^{m(q-1)-(q-2)}\subseteq M^{m(q-1)-2(q-2)+1}$.\\
Therefore $B_{(q-2,q-2,q-1,\ldots,q-1)}(\underline{x})\in M^{m(q-1)-2(q-2)+1}$.\\
And since $\deg(H_{(q-2,q-2,q-1,\ldots,q-1)}(\underline{Y}))=2(q-2) > 2(q-2)-1$, we have\\ $B_{(q-2,q-2,q-1,\ldots,q-1)}(\underline{x})\notin C_{2(q-2)-1}(m,q)$.\\
Hence $M^{m(q-1)-2(q-2)+1}\neq C_{2(q-2)-1}(m,q)$.\\
- For $q>4$, since $B_{(q-2,q-2,q-1,\ldots,q-1)}(\underline{x})\notin C_{i}(m,q)$ where $q-1\leq i\leq 2(q-2)-2$, then $M^{m(q-1)-i}\neq C_{i}(m,q)$ for $q-1\leq i\leq 2(q-2)-2$.\\
Continuing in this way, we apply the same method for each step. Thus, for the m-th step, we have\\
\underline{Step m}:\\
- $M^{m(q-1)-(m-1)(q-2)}$ is linearly generated over $\mathbb{F}_{q}$ by $B_{\underline{i}}(\underline{x})$ such that $\left|\underline{i}\right|\geq m(q-1)-(m-1)(q-2)$.\\
Consider $B_{(q-2,\ldots,q-2)}(\underline{x})$ which is in $M^{m(q-1)-(m-1)(q-2)}$ (for $m\geq 2$). By Corollary \ref{key} and (\ref{degr}),
we have $\deg(H_{(q-2,\ldots,q-2)}(\underline{Y}))=m(q-2) > (m-1)(q-2)$ (for $q\geq 4$).\\
Thus $B_{(q-2,\ldots,q-2)}(\underline{x})\notin C_{(m-1)(q-2)}(m,q)$. Therefore\\ $M^{m(q-1)-(m-1)(q-2)}\neq C_{(m-1)(q-2)}(m,q)$.\\
- Since $q\geq 4$, then $M^{m(q-1)-(m-1)(q-2)}\subseteq M^{m(q-1)-m(q-2)+1}$.\\
Hence $B_{(q-2,\ldots,q-2)}(\underline{x})\in M^{m(q-1)-m(q-2)+1}$.\\
And since $\deg(H_{(q-2,\ldots,q-2)}(\underline{Y}))=m(q-2) > m(q-2)-1$,\\then $B_{(q-2,\ldots,q-2)}(\underline{x})\notin C_{m(q-2)-1}(m,q)$.\\
Therefore $M^{m(q-1)-m(q-2)+1}\neq C_{m(q-2)-1}(m,q)$.\\
- For $q>4$, since $B_{(q-2,\ldots,q-2)}(\underline{x})\notin C_{i}(m,q)$ where $(m-1)(q-2)+1\leq i\leq m(q-2)-2$, we have $M^{m(q-1)-i}\neq C_{i}(m,q)$ for $(m-1)(q-2)+1\leq i\leq m(q-2)-2$.\\
\indent To end the proof, we consider the following final step:\\
- $M^{m(q-1)-m(q-2)}=M^{m}$ is linearly generated over $\mathbb{F}_{q}$ by $B_{\underline{i}}(\underline{x})$ such that $\left|\underline{i}\right|\geq m$.\\
Consider $B_{(0,2,1,\ldots,1)}(\underline{x})$ which is in $M^{m}$. By Corollary \ref{composite}, Remark \ref{intzero} and (\ref{degr})
we have $\deg(H_{(0,2,1,\ldots,1)}(\underline{Y}))=m(q-2)+1 > m(q-2)$.\\
Thus $B_{(0,2,1,\ldots,1)}(\underline{x})\notin C_{m(q-2)}(m,q)$. Hence $M^{m}\neq C_{m(q-2)}(m,q)$.\\
- $M^{m-1}$ is linearly generated over $\mathbb{F}_{q}$ by $B_{\underline{i}}(\underline{x})$ such that $\left|\underline{i}\right|\geq m-1$.\\
Consider $B_{(0,0,2,1,\ldots,1)}(\underline{x})$ which is in $M^{m-1}$.\\
We have $\deg(H_{(0,0,2,1,\ldots,1)}(\underline{Y}))=m(q-2)+2 > m(q-2)+1$.\\
Thus $B_{(0,0,2,1,\ldots,1)}(\underline{x})\notin C_{m(q-2)+1}(m,q)$. So $M^{m-1}\neq C_{m(q-2)+1}(m,q)$.\\
Similarly, we have finally:\\
- $M^{2}$ is linearly generated over $\mathbb{F}_{q}$ by $B_{\underline{i}}(\underline{x})$ such that $\left|\underline{i}\right|\geq 2$.\\
Consider $B_{(0,\ldots,0,2)}(\underline{x})$ which is in $M^{2}$.\\
We have $\deg(H_{(0,\ldots,0,2)}(\underline{Y}))=m(q-1)-1 > m(q-1)-2$.\\
Thus $B_{(0,\ldots,0,2)}(\underline{x})\notin C_{m(q-1)-2}(m,q)$. Hence $M^{2}\neq C_{m(q-1)-2}(m,q)$.
\end{proof}

\section{Example: the GRM codes of length $8^3$ over $\mathbb{F}_{8}$ and the powers of the radical of \\$A=\mathbb{F}_{8}[X_{1},X_{2},X_{3}]\;/\;(\;
X_{1}^{8}-1,X_{2}^{8}-1,X_{3}^{8}-1\;)$}
In this section, we consider the case $m=3$, $q=2^{3}=8$ and $n=8^{3}=512$.\\
Consider the modular algebra\\
$A=\mathbb{F}_{8}[X_{1},X_{2},X_{3}]\;/\;(\;X_{1}^{8}-1,X_{2}^{8}-1,X_{3}^{8}-1\;)\\=\left\{\sum_{i_1=0}^{7}\sum_{i_2=0}^{7}\sum_{i_3=0}^{7}a_{i_{1}i_{2}i_{3}}x_{1}^{i_1}x_{2}^{i_2}x_{3}^{i_3} \mid a_{i_{1}i_{2}i_{3}}\in \mathbb{F}_{8}\right\}$ with $x_1=X_1+I,x_2=X_2+I,x_3=X_3+I$ and $I=\left(X_{1}^{8}-1,X_{2}^{8}-1,X_{3}^{8}-1\right)$.\\
We have the sequence of ideals\\
$\left\{0\right\}\subset M^{21}\subset M^{20}\subset \cdots \subset M^{2}\subset M\subset A$ where $M=\rad(A)$.\\
Let $\nu$ be an integer such that $0\leq \nu\leq 21$. Consider the GRM codes $C_{\nu}(3,8)$ of length $512$ and of order $\nu$ over $\mathbb{F}_{8}$.\\
We have the ascending sequence:\\
$\left\{0\right\}\subset C_{0}(3,8)\subset C_{1}(3,8)\subset \cdots \subset  C_{19}(3,8)\subset  C_{20}(3,8)\subset  C_{21}(3,8)=(\mathbb{F}_{8})^{512}$.\\
-$M^{21}$ and $C_{0}(3,8)$ are linearly generated by $B_{(7,7,7)}(\underline{x})=\check{1}$\;(the ``all one word''), and we have $M^{21}=C_{0}(3,8)$.\\
-Since $B_{(6,7,7)}(\underline{x})\in M^{20}$ and $\deg(H_{(6,7,7)}(\underline{Y}))=6 > 1$, then \\$B_{(6,7,7)}(\underline{x})\notin C_{1}(3,8)$. Thus, $M^{20}\neq C_{1}(3,8)$.\\
Since $B_{(6,7,7)}(\underline{x})\in M^{i}$, $16\leq i\leq 19$ and $B_{(6,7,7)}(\underline{x})\notin C_{i}(3,8)$, $2\leq i\leq 5$, we have $M^{19}\neq C_{2}(3,8)$, $M^{18}\neq C_{3}(3,8)$, $M^{17}\neq C_{4}(3,8)$, $M^{16}\neq C_{5}(3,8)$.\\
-Since $B_{(6,6,7)}(\underline{x})\in M^{15}$ and $\deg(H_{(6,6,7)}(\underline{Y}))=12 > 6$, then\\ $B_{(6,6,7)}(\underline{x})\notin C_{6}(3,8)$. Therefore, $M^{15}\neq C_{6}(3,8)$.\\
Since $B_{(6,6,7)}(\underline{x})\in M^{i}$, $10\leq i\leq 14$ and $B_{(6,6,7)}(\underline{x})\notin C_{i}(3,8)$, $7\leq i\leq 11$, we have $M^{14}\neq C_{7}(3,8)$, $M^{13}\neq C_{8}(3,8)$, $M^{12}\neq C_{9}(3,8)$, $M^{11}\neq C_{10}(3,8)$, $M^{10}\neq C_{11}(3,8)$.\\
-Since $B_{(6,6,6)}(\underline{x})\in M^{9}$ and $\deg(H_{(6,6,6)}(\underline{Y}))=18 > 12$, then\\ $B_{(6,6,6)}(\underline{x})\notin C_{12}(3,8)$. This implies $M^{9}\neq C_{12}(3,8)$.\\
Since $B_{(6,6,6)}(\underline{x})\in M^{i}$, $4\leq i \leq 8$, and $B_{(6,6,6)}(\underline{x})\notin C_{i}(3,8)$, $13\leq i\leq 17$, we have $M^{8}\neq C_{13}(3,8)$, $M^{7}\neq C_{14}(3,8)$, $M^{6}\neq C_{15}(3,8)$, $M^{5}\neq C_{16}(3,8)$, $M^{4}\neq C_{17}(3,8)$.\\
-Since $B_{(0,2,1)}(\underline{x})\in M^{3}$ and $\deg(H_{(0,2,1)}(\underline{Y}))=19 > 18$, then\\ $B_{(0,2,1)}(\underline{x})\notin C_{18}(3,8)$. Thus, $M^{3}\neq C_{18}(3,8)$.\\
-We have $B_{(0,0,2)}(\underline{x})\in M^{2}$ and $\deg(H_{(0,0,2)}(\underline{Y}))=20 > 19$. Therefore,\\ $B_{(0,0,2)}(\underline{x})\notin C_{19}(3,8)$ and $M^{2}\neq C_{19}(3,8)$.

\end{document}